\newtheorem{theorem}{Theorem}
\theoremstyle{definition}
\newtheorem{definition}{Definition}
\theoremstyle{remark}
\newtheorem{note}{Note}
\newdefinition{remark}{Remark}
\journal{ArXiv}
\begin{document}

\begin{frontmatter}



\title{Comfortability of a Team in  Social Networks}

\author[NITT]{Lakshmi Prabha S}
\ead{jaislp111@gmail.com}

\author[NITT]{T.N.Janakiraman\corref{cor1}}
\ead{janaki@nitt.edu}

\cortext[cor1]{Corresponding Author - T.N.Janakiraman.}

\address[NITT]{Department of Mathematics, National Institute of Technology, 
Trichy-620015, Tamil Nadu, India.} 



\begin{abstract}
There are many indexes (measures or metrics) in Social Network Analysis (SNA), like density, cohesion, etc. We have defined a new SNA index called \lq \lq comfortability \rq\rq. In this paper, core comfortable team of a  social network is defined based on graph theoretic concepts and some of their structural properties are analyzed. 
 Comfortability is one of the important attributes  (characteristics) for a successful team work. So, it is necessary  to find a comfortable and successful team in any given social network.   

It is proved that forming core comfortable team in any  network is NP-Complete using the concepts of domination in graph theory. Next, we give two polynomial-time approximation algorithms for finding such a core comfortable team in any given network with  performance ratio $O(\ln \Delta)$, where $\Delta$ is the maximum degree of a given network (graph). The time complexity of the algorithm is proved to be $O(n^{3})$, where $n$ is the number of persons (vertices) in the network (graph). It is also proved that the algorithms give good results in scale-free networks.

\end{abstract}

\begin{keyword}
Social networks \sep comfortability \sep less dispersive set \sep core comfortable team \sep graph algorithms \sep performance ratio \sep $k-$ domination.

\MSC[2010] 91D30 \sep  05C82 \sep 05C85  \sep 05C69 \sep 05C90.
\end{keyword}

\end{frontmatter}


\section{Introduction}
\label{intro}
There are many factors, lack of  which affect the group or team effectiveness. Team processes describe subtle aspects of interaction and patterns of organizing, that transform input into output. 
The team processes will be described in terms of seven characteristics: coordination, communication,
cohesion, decision making, conflict management, social relationships and performance feedback. The readers are directed to refer Michan et al.~\cite{team} for further details of characteristics of team and Forsyth~\cite{group} for more details on group dynamics. In this paper, we discuss about an attribute or characteristic   called \lq\lq COMFORTABILITY \rq\rq, which is also essential for a successful team work. So, we defined it as a new SNA index in our paper~\cite{JRLP}. 

Since the beginning of Social Network Analysis, Graph Theory has been a very important tool both to represent social structure and to calculate some indexes, which are useful to understand several aspects of the social context under analysis. Some of the existing indexes (measures or metrics) are betweenness, bridge, centrality, flow betweenness centrality, centralization, closeness, clustering coefficient, cohesion, degree,   density,  eigenvector centrality,  path length. Readers are directed to refer Martino et. al.~\cite{diameter} for more details on indexes in SNA.  In our paper~\cite{JRLP}, we defined a new  SNA index called \lq comfortability \rq. Based on this index, we have  defined comfortable team, better comfortable team and highly comfortable team in our paper~\cite{JRLP} and totally comfortable team in our paper~\cite{china}.

Let the social network be represented in terms of a graph, with the vertex of the graph denotes a person (an actor) in the social network and an edge between two vertices in a graph represents relationship between two persons in the social  network. All the networks  are connected networks in this paper, unless otherwise specified. If the given network is disconnected, then each connected component of the network can be considered and hence it is enough to consider only connected networks.  Hereafter, the word \lq team\rq \ represents induced sub network (sub graph) of a given network (graph). 

Following are some introduction for \textbf{basic graph theoretic  concepts}.
Some basic definitions from Slater et al.~\cite{Slater} are given below.

The  graphs  considered  in  this  paper  are  finite, simple, connected  and  undirected, unless otherwise specified. For a graph $G$, let $V(G)$ (or simply $V$) and $E(G)$ denote its vertex (node) set and edge set respectively and $n$ and $m$ denote the cardinality of those sets respectively. The \textit{degree} of a vertex $v$ in a graph $G$ is denoted by $deg_{G}(v)$. The \textit{maximum degree} of the graph $G$ is denoted by $\Delta(G)$. The length of any shortest path between any two vertices $u$ and $v$ of a connected graph $G$ is called the \textit{distance} between $u$ and $v$ and is denoted by $d_{G}(u,v)$. For a connected graph  $G$, the \textit{eccentricity} $e_{G}(v) = \max\{d_{G}(u,v): u\in V(G)\}$. If there is no confusion, we simply use the notions  $deg(v)$, $d(u,v)$ and $e(v)$ to denote degree, distance and eccentricity respectively for the concerned graph. The minimum and maximum eccentricities are the \textit{radius} and  \textit{diameter} of $G$, denoted by  $r(G)$ and $diam(G)$ respectively. A vertex with  eccentricity $r(G)$ is called a \textit{central vertex} and a vertex with eccentricity $diam(G)$ is called a \textit{peripheral vertex}. 
   
   For  $v \in V(G)$, \textit{neighbors} of $v$ are the vertices adjacent to $v$ in $G$. The neighborhood $N_{G}(v)$ of $v$ is the set of all neighbors of  $v$ in $G$. It is also denoted by $N_{1}(v)$. $N_{j}(v)$ is the set of all vertices at distance $j$ from $v$ in $G$. A vertex $u$ is said to be an \textit{eccentric vertex} of $v$, when $d(u, v) = e(v)$. If $A$ and $B$ are not necessarily disjoint sets of vertices, we define
the distance from $A$ to $B$ as
$dist(A,B) = \ min\{ d(a, b) : a \in A, b \in B \}$. \textit{Cardinality} of a set $D$ represents the number of vertices in the set $D$. Cardinality of $D$ is denoted by $|D|$.

 A vertex of degree one is called a \textit{pendant vertex}. A \textit{walk} of length $j$ is an alternating sequence $W:  u_0, e_1, u_1, e_2, u_2,\ldots, u_{j-1}, e_j, u_j$ of vertices and edges with $e_i = u_{i-1}u_i$. If all $j$ edges are distinct, then $W$ is called a \textit{trail}. A walk with $j+1$ distinct vertices $u_0, u_1, \ldots, u_j$ is a \textit{path} and if $u_0 = u_j$ but $u_1, u_2, \ldots, u_j$ are distinct, then the trail is a \textit{cycle}. A path of length $n$ is denoted by $P_n$ and a cycle of length $n$ is denoted by $C_n$. A graph $G$ is said to be \textit{connected} if there is a path joining each pair of nodes. A component of a graph is a maximal connected sub graph. If a graph has only one component, then it is connected, otherwise it is \textit{disconnected}. A \textit{tree} is a connected graph with no cycles (acyclic).

  We say that   $H$ is a \textit{sub graph} of a graph $G$, denoted  by $H < G$, if  $V(H) \subseteq V(G)$  and $uv \in E(H)$ implies $uv \in E(G)$ . If a sub graph $H$ satisfies  the added property that for every pair $u,v$ of vertices, $uv \in E(H)$ if and only if $uv \in E(G)$, then $H$ is called an \textit{induced sub graph} of $G$. The induced sub graph $H$ of $G$ with $S = V(H)$ is called the sub graph induced by $S$ and is denoted by $\left\langle S|G\right\rangle$ or simply $\left\langle S\right\rangle$.
  
  Let $k$ be a positive integer. The $k^{th}$  \textit{power} $G^{k}$ of a graph $G$  has $V(G^{k}) = V(G)$ with $u,v$ adjacent in $G^{k}$ whenever $d(u,v) \leq k$.

  The concept of domination was introduced by Ore~\cite{Ore} . A set $D \subseteq V(G)$  is  called a \textit{dominating set} if every vertex  $v$ in $V$ is either an element of $D$ or is adjacent to an element of  $D$.  A  dominating set  $D$ is a minimal dominating set if  $D-\{v\}$  is  not a dominating set for any $v \in D$. The domination number  $\gamma(G)$ of a graph $G$ equals the minimum cardinality of a dominating set in $G$. 
  
  A set $D$ of vertices in a connected graph $G$ is called a \textit{$k$-dominating set} if every vertex in $V- D$ is within distance $k$ from some vertex of $D$. The concept of the $k$-dominating set was
introduced by Chang and Nemhauser~\cite{Chang,Chang2} and could find applications for many situations and structures which give rise to graphs; see the books by Slater et al~\cite{Slater, REF6}. So, dominating set is nothing but 1-dominating set.
  
  Sampath  Kumar  and  Walikar~\cite{CDS}  defined  a connected  dominating  set   $D$  to  be  a dominating set $D$,  whose  induced   sub-graph  $\left\langle D\right\rangle$ is \textit{connected}.  The  minimum  cardinality  of  a connected  dominating  set  is  the  connected domination number $\gamma_c(G)$. 
  
The readers are also directed to refer Slater et al.~\cite{Slater} for  further details of basic definitions, not given in this paper.

Let us recall the terminologies as follows: The symbol ($\rightarrow$)  denotes  \lq\lq represents \rq\rq 
\begin{itemize}
	\item Graph $\rightarrow$ Social Network (connected)
	\item Vertex of a graph $\rightarrow$ Person in a social network
	\item Edge between two vertices of a graph $\rightarrow$ Relationship between two persons in a social network
	\item Induced subgraph of a graph $\rightarrow$  \textbf{Team or Group} of a social network.
\end{itemize}
 
Given a connected network of people.  Our problem is to find a team (sub graph) which is less dispersive, highly flexible and performing better. 

\begin{note}
\textbf{Notation 1:}\\
In all the figures of this paper, 
\begin{itemize}
	\item $\{v_1, v_2,\ldots, v_n\}$ represent the vertex set of the graph $G$, that is,\\ $V(G) = \{v_1, v_2,\ldots, v_n\}$. 
	\item The numbers besides every vertex represents the eccentricity of that vertex. 
For example, in Figure~\ref{fig1}, in the graph $G$, $e(v_1) = 5$, $e(v_2) = 4$, $e(v_3) = 3$, $e(v_4) = 3$, $e(v_5) = 4$ and $e(v_6) = 5$. 
\item The set notation  $D = \{v_1, v_2, \ldots, v_n\}$ represents only the individual persons but does not represent the relationship between them.
\item The notation $\left\langle D \right\rangle$ represents the team. $\left\langle D \right\rangle$ is the induced sub graph of $G$, which represents the persons as well as the relationship between them.
So, the set $D$ represents only the team members and the team represents the persons with their relationship.
\end{itemize}
\end{note}
The remaining part of the paper is organized as follows:
\begin{itemize}
\item Section~\ref{prior} discusses about prior work.
	\item Section~\ref{gcomf} defines core comfortable team and analyses the concept with some examples.
\item In section~\ref{algo}, an approximation algorithm is given for finding core comfortable team in any given network,  with illustrations. Time complexity of the algorithm is analyzed. Also, correctness of the algorithm  and performance ratio of the algorithm are proved in this section.
\item In Section~\ref{second}, another approximation algorithm is given, based on connected dominating set, for finding core comfortable team in any given network,  with illustrations. Time complexity, correctness   and performance ratio of this algorithm are discussed in this section.
\item In section~\ref{adv}, some advantages of the two algorithms are given.
\item Section~\ref{conc} concludes the paper and discusses about some future work.
\end{itemize}

\section{Prior Work}
\label{prior}
In our paper ~\cite{JRLP}, we  defined characteristics of a good performing team and mathematically formulated them and given approximation algorithm for finding such a good performing team. \textit{In order to make this paper self-contained, we have given all the necessary definitions, examples and properties from our paper~\cite{JRLP}, which are needed for this paper}, in this section.
\begin{definition}~\cite{JRLP} 
\label{good} A team  is said to be  \textbf{good performing or successful} if the team is 
\begin{enumerate}
	\item less dispersive
	\item having good communication among the team members
	\item easily accessible to the non- team members 
	\item a good service provider to the non-team members (for the whole network).
\end{enumerate}
\end{definition}
\textbf{Mathematical Formulation}\\
\textbf{Domination $\rightarrow$ good service provider to the non-team members}.\\
\textbf{Connectedness $\rightarrow$ good communication among team members}.\\
 
\begin{definition}\textbf{Less Dispersive Set:}~\cite{JRLP}
\label{less}
A set $D$ is said to be less dispersive, if $e_{\left\langle D\right\rangle}(v) <  e_G(v)$, for every vertex $v \in D$.
\end{definition}
\begin{definition}\textbf{Less Dispersive Dominating Set:}~\cite{JRLP}
A set $D$ is said to be a less dispersive dominating set if the set $D$ is dominating, connected and less dispersive. The cardinality of minimum  less dispersive dominating set of $G$ is denoted by $\gamma_{comf}(G)$. A set of vertices is said to be a  $\left\langle \gamma_{comf}-set \right\rangle$, if it is a less dispersive dominating set with cardinality  $\gamma_{comf}(G)$. 
\end{definition}
\begin{definition}\textbf{Comfortable Team:}~\cite{JRLP}
A team $\left\langle D \right\rangle$ is said to be a comfortable team if $\left\langle D \right\rangle$  is  less dispersive and dominating. Minimum comfortable team is a comfortable team with the condition:  $|D|$ is minimum.
\end{definition}
\textbf{Example 1:} Consider the graph (network) $G$ in Figure~\ref{fig1}.  
\begin{figure}[h]
\centerline{\includegraphics[width=3in]{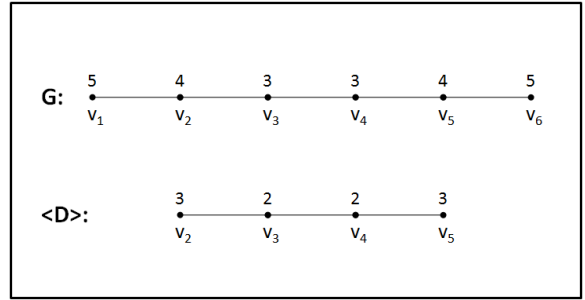}}
\caption{A Network and its Comfortable Team}
\label{fig1}
\end{figure}

Here, $G$ is a path of length six ($P_6$). $D= \{v_2,v_3,v_4,v_5\}$. The induced sub graph  $\left\langle D \right\rangle$ of $G$ forms a path of length four ($P_4$) and so it dominates all the vertices in $V-D$. Also, $\left\langle D \right\rangle$  forms the comfortable team of $G$,  because \\$e_{\left\langle D\right\rangle}(v_2) = 3 < 4 =   e_G(v_2). \  \Rightarrow e_{\left\langle D\right\rangle}(v_2) <  e_G(v_2)$. Similarly, $e_{\left\langle D\right\rangle}(v_i) <  e_G(v_i)$ for every $i = 3,4,5$. Thus, $D$ forms less dispersive set and  hence $\left\langle D \right\rangle$  forms the comfortable team of $G$. $\Rightarrow \gamma_{comf}(P_6) = 4$.

So, the problem is coined as: Find a team which is dominating, connected and less dispersive. 
It is  to be noted that there are many graphs  which do not have $\left\langle \gamma_{comf}-set \right\rangle$. So, we must try to avoid such kind of networks for successful team work.\\
\textbf{Example 2:} Consider the graph $G$ in Figure~\ref{fig2}.
\begin{figure}[h]
\centerline{\includegraphics[width=3in]{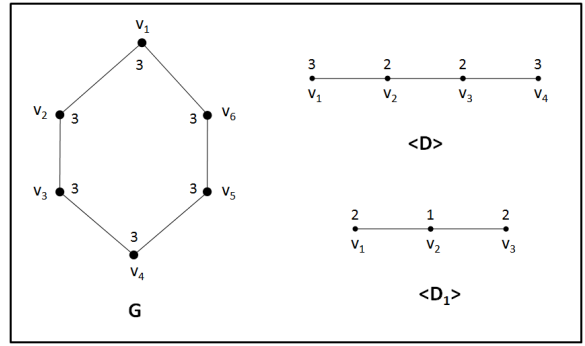}}
\caption{A Network and its CC Team}
\label{fig2}
\end{figure}

 Here, $G$ is a cycle of length six ($C_6$).
The vertices $v_1$ and $v_4$ dominate all the vertices of $G$. So, with connectedness, we can take $D=\{v_1, v_2, v_3, v_4\}$. The set $D$ dominates $G$, but $D$ is not less dispersive, because,\\
$e_{\left\langle D\right\rangle}(v_1) = 3 =  e_G(v_1)$ and  $e_{\left\langle D\right\rangle}(v_4) = 3 =  e_G(v_4)$. The vertices $v_1$ and $v_4$  maintained the original  eccentricity as in $G$. Thus, $e_{\left\langle D\right\rangle}(v) <  e_G(v)$, for every vertex $V \in D$ is \textbf{not satisfied}. So, $D$ is not less dispersive and hence $\left\langle D\right\rangle$ is not a comfortable team. \\
Also, $D_1 =\{v_1, v_2, v_3\}$ forms less dispersive set in $G$, (from Figure~\ref{fig2}), but $D_1$ is not dominating. The vertex $v_5$ is left undominated. 

\begin{note}
\label{imp}
From the above discussion in Example 2, we get,
 \begin{itemize}
	\item  the less dispersive set may not be dominating 
\item  the dominating set may not be less dispersive.
\end{itemize} 

So, under one of these two cases, the graph $G$ does not possess comfortable team. It is to be noted that not only $C_6$, but all the cycles $C_n$, do not possess a comfortable team. Also, there are infinite families of graphs which do not possess comfortable team.
\end{note}

\textbf{Disadvantage of the comfortable team~\cite{JRLP}:}\\ 
As discussed in the Example 2, comfortable team does not exist in any given network. Infinite families of networks do not possess comfortable team. 
 
The main aim of this paper is to find a team which is \textbf{more comfortable and less dispersive}, in \textbf{any given social network}.\\
So, we define a core comfortable team with a  modification in the comfortable team in the next section.

\section{Core Comfortable Team}
\label{gcomf}

As discussed in Note~\ref{imp}, the less dispersive set may not be dominating. So, we define  less dispersive set with $k^{*}-$domination.
\begin{definition}\textbf{Less Dispersive $k^{*}-$ Dominating Set:}

A set $D$ is said to be a \lq  less dispersive $k^{*}$-dominating\rq \  set if $D$ is a less dispersive set and a $k^{*}-$  dominating set. That is, 
\begin{enumerate}
\item $e_{\left\langle D\right\rangle}(v) <  e_G(v)$, for every vertex $v \in D$ (less dispersive)  
		\item $dist(D, V-D) \leq k^{*}$ ($k^{*}-$ domination). This implies that $k^{*} \leq diam(G)$, (because any vertex in $V-D$ can reach $D$ at a distance of at most $diam(G)$). 
 
\end{enumerate}
Minimum cardinality of a \lq  less dispersive $k^{*}$-dominating\rq \  set of $G$ is denoted by $\gamma_{k^{*}comf}(G)$ and maximum cardinality of a \lq  less dispersive $k^{*}$-dominating\rq \  set of $G$ is denoted by $\Gamma_{k^{*}comf}(G)$.
\end{definition}

\begin{definition}\textbf{Core Comfortable Team:}
A team $\left\langle D \right\rangle$ is said to be a  Core Comfortable (CC) team if  $\left\langle D \right\rangle$  is  less dispersive $k^{*}$-dominating. 
\begin{itemize}
\item Min CC team is a CC team with the condition: $|D|$ and $k^{*}$ are minimum 
\item Max CC team is a CC team with the condition: $|D|$ is maximum and $k^{*}$ is minimum.
\end{itemize}
\end{definition}
\textbf{Example 3:} Consider the graph $G$ ($C_6$) in Figure~\ref{fig2}. In $C_6$, $D_1= \{v_1, v_2, v_3\}$ forms a  less dispersive 2-dominating set, because
\begin{enumerate}
	\item  $e_{\left\langle D_1\right\rangle}(v_i) <  e_G(v_i)$, for $i = 1,2,3$.
	
	\item $k^{*} = 2$, because $v_5$ is reachable from $D_1$ by  distance  two, $v_4$ and $v_6$ are reachable from $D_1$ by distance one. $\Rightarrow dist(D_1, V-D_1) \leq 2$. 
\end{enumerate}
Thus, CC team exists in $C_6$ and hence $\gamma_{2comf}(G) = \Gamma_{2comf}(G) = 3$.
\begin{theorem}
\label{better NPC}
Forming core comfortable team in a given network is NP-complete.
\end{theorem}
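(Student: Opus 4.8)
The plan is to prove the statement in the standard two stages: first that the decision version of the problem lies in NP, and then that it is NP-hard by a polynomial reduction from a known NP-complete domination problem. I would phrase the decision version as follows: given a graph $G$ and integers $j$ and $k^{*}$, does $G$ admit a less dispersive $k^{*}$-dominating set $D$ with $|D| \le j$ (equivalently, $\gamma_{k^{*}comf}(G) \le j$)?

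For membership in NP, I would take the candidate set $D$ itself as the certificate. Verification runs in polynomial time: compute all pairwise distances in $G$ and in the induced subgraph $\left\langle D\right\rangle$ by a breadth-first search from each vertex (time $O(nm)$), which immediately yields every value $e_{G}(v)$ and $e_{\left\langle D\right\rangle}(v)$; then check that $\left\langle D\right\rangle$ is connected, that $e_{\left\langle D\right\rangle}(v) < e_{G}(v)$ for each $v \in D$ (the less dispersive condition, which also forces finiteness and hence connectivity of $\left\langle D\right\rangle$), and that $dist(D, V-D) \le k^{*}$ (the $k^{*}$-domination condition). All checks are polynomial, so the problem is in NP.

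For NP-hardness I would reduce from the (connected) dominating set problem, which is known to be NP-complete. Starting from an instance $(H,k)$, I would build $G$ by attaching to the vertices of $H$ an eccentricity-inflating gadget (for instance, pendant paths) and choosing $k^{*}$ so that two things hold simultaneously: a set $D$ living in the original \emph{core} $V(H)$ is $k^{*}$-dominating in $G$ exactly when it is (connected) dominating in $H$, and the gadget makes $e_{G}(v)$ large enough that the less dispersive inequality $e_{\left\langle D\right\rangle}(v) < e_{G}(v)$ is automatically available for such a $D$. I would then argue the two directions: a connected dominating set of $H$ of size at most $k$ lifts to a CC team of $G$ of the target size (checking connectivity, the less dispersive inequality through the inflated eccentricities, and $k^{*}$-domination of the gadget vertices), and conversely any CC team of $G$ projects back onto a connected dominating set of $H$ of no larger cardinality, giving the equivalence of the two instances.

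The hard part will be the gadget design, because the less dispersive requirement and the $k^{*}$-domination requirement pull in opposite directions. The vertices that realize a large $e_{G}(v)$ must lie outside $D$ (distances only grow in an induced subgraph, so a far vertex placed inside $D$ would enlarge $e_{\left\langle D\right\rangle}(v)$ rather than help), yet every vertex outside $D$ must be within distance $k^{*}$ of $D$; enlarging the gadget to create more eccentricity slack therefore forces a larger $k^{*}$, which in turn risks trivializing the domination constraint on the $H$-part, while keeping $k^{*}$ small limits the attainable slack. Reconciling these two demands, and at the same time guaranteeing that the projected set stays connected so that its eccentricities remain finite, is the crux of the argument; the gadget must be engineered so that the eccentricity inflation and the domination distance can be controlled independently of each other.
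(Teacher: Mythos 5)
Your NP-membership argument is fine (and is in fact more careful than the paper, which never addresses membership at all), but the NP-hardness half of your proposal is a plan rather than a proof: you never exhibit the gadget, never fix $k^{*}$, and never verify either direction of the reduction. You yourself flag the gadget design as ``the crux,'' and the obstruction you describe is real, not a technicality. If you attach pendant paths of length $\ell$ to inflate eccentricities, the path tips force $k^{*} \geq \ell$, and once $k^{*}$ is that large the $k^{*}$-domination constraint on the $V(H)$-part degenerates --- essentially any connected subset of $V(H)$ is $k^{*}$-dominating in $G$ --- so the correspondence with dominating sets of $H$ collapses. Conversely, if you keep $k^{*}$ small (say $1$ or $2$) so that domination in $G$ mirrors domination in $H$, then short pendant attachments give only $O(1)$ eccentricity slack, while $e_{\left\langle D\right\rangle}(v)$ for a connected $D$ can exceed $e_{H}(v)$ by up to $|D|$, so the less dispersive inequality is not ``automatically available'' as you assert; and small-$k^{*}$ pendant gadgets together with the connectivity forced by less dispersiveness tend to pull every support vertex of $H$ into $D$, again trivializing the instance. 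Since the theorem's content is exactly this reduction, deferring it means the hardness direction is unproven as written.

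For comparison, the paper avoids gadget construction entirely by a reformulation: since any less dispersive set is connected, a less dispersive $k^{*}$-dominating set of $G$ is precisely a connected dominating set of the power graph $G^{k^{*}}$, and the paper then appeals to the NP-completeness of connected domination. Your instinct to do an explicit reduction instead is not unreasonable --- the paper's step from ``$\gamma_{c}(G)$ is NP-complete'' to ``$\gamma_{c}(G^{k^{*}})$ is NP-complete'' is itself terse, since connected domination restricted to the class of graph powers is not automatically hard and the converse direction (that a connected dominating set of $G^{k^{*}}$ yields a less dispersive set of $G$) is not argued --- but to make your route work you must actually engineer a gadget in which the eccentricity inflation and the domination radius are decoupled, prove both directions of the equivalence, and verify connectivity of the projected set. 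Until then the proposal identifies the right difficulty without resolving it.
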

\begin{proof}
Let $D$ be a minimum less dispersive $k^{*}-$  dominating set of $G$.\\
$\Rightarrow D$  is a connected $k^{*}-$ dominating set of $G$ (since any less dispersive set is a connected set).\\
$\Rightarrow D$  is a connected dominating set of $G^{k^{*}}$ (by definition of the graph $G^{k}$). \\
Finding $\gamma_c(G)$, for any graph $G$ is NP-complete (by Slater et al.~\cite{Slater}).\\
$\Rightarrow$ Finding $\gamma_c(G^{k^{*}})$ is NP-complete.\\
$\Rightarrow$ Finding minimum less dispersive $k^{*}-$  dominating set of $G$ is NP-complete (by above points). \\
Thus forming core comfortable team in a given network is NP-complete.
\end{proof}
Next, we give two polynomial time approximation algorithms for finding core comfortable team in any given network.
\section{Approximation Algorithm 1}
\label{algo}
In this section, we give a polynomial-time approximation algorithm for finding CC  team from a given network.
\subsection{Notation 2}
\begin{itemize}
	\item $D \rightarrow $ minimum less dispersive, $k^{*}$-dominating set.
	\item $D_1\rightarrow $  minimal  less dispersive, $k$-dominating set, (output of our algorithm). $\Rightarrow |D_1| \geq |D|$.
	\item $k^{*} \rightarrow$ the distance between two sets $D$ and $V-D$, that is, \\ $dist(D,V-D) \leq k^{*}$.
	\item $k \rightarrow$ the distance between two sets $D_1$ and $V-D_1$, that is, \\ $dist(D_1,V-D_1) \leq k$.
	\item Instant $D\rightarrow$ The set $D$ at a particular iteration.
	\item Performance ratio = $\displaystyle \frac{|minimal\ set|} {|minimum\ set|}$.
\end{itemize}
\subsection{Algorithm GOCOM} 
A polynomial time approximation algorithm for finding CC team is given below.\\
\textbf{Input:} $G$.\\
\textbf{Output:} $D_1$, which is a less dispersive, $k$-dominating set, so that $\left\langle D_1\right\rangle$ is a core comfortable team.\\
GOCOM(G)
\begin{enumerate} 
\item Choose a central vertex $v$ (ties can be broken arbitrarily) and add it to $D_1$.
\item If $diam(G)$ is even, then choose all the vertices in $N_j(v)$, for $j \leq \displaystyle \frac{diam(G)}{2}$ and add them to $D_1$.\\ 
else choose all the vertices in $N_j(v)$, for $j \leq \displaystyle \frac{(diam(G)-1)}{2}$ and add them to $D_1$.
\item Put $i = \left\lfloor \displaystyle \frac{diam(G)}{2} \right\rfloor$.
	\item  If $e_{\left\langle D_1\right\rangle}(v) < e_G(v)$, for every vertex $v \in D_1$, then Goto next step (step 5), else Goto Step 7.
 \item Put $i=i+1$.
\item Choose all vertices from $N_i(v)$ and add it to $D_1$. Then GOTO step 4.
\item Remove suitably some vertices from $D_1$ (say from $D_1 \cap N_{i-1}(v)$, from $D_1 \cap N_{i-2}(v)$, and so on) such that the condition in step 4 is satisfied.
\item Print $D_1$.
\item Stop.	
\end{enumerate} 

\begin{note}
\label{correct1}
At each iteration after forming $D_1$, we check up the condition: 
\begin{equation}
\label{ecc}
e_{\left\langle D_1\right\rangle}(v) < e_G(v),\ for\  every\  vertex\  v \in D_1.
\end{equation}
 If the condition~\ref{ecc} is not satisfied in Step 4, then the Step 7 is executed in the algorithm.  We remove some vertices from $D_1$ until the condition~\ref{ecc} is satisfied. 
 
 If the condition~\ref{ecc} is satisfied, then we add some vertices to $D_1$. There may be a question: why should the process be continued?  We add some vertices to $D_1$, in order to minimize $k$. Our aim is to minimize $k$ as well as to satisfy the  condition~\ref{ecc}.
 
 So, in Step 6, we add some vertices to $D_1$ and check up the condition~\ref{ecc}. 
 
 If the condition~\ref{ecc} is satisifed, then we proceed to add vertices to $D_1$. But, we can not go on adding vertices to $D_1$, because at one stage, the condition ~\ref{ecc} will not be satisfied. Then the step 7 will be executed.   
 
 After Step 7 is executed, the condition ~\ref{ecc} is satisfied. So, no further addition and deletion of vertices are done. The algorithm prints $D_1$ and ends.
\end{note}
\begin{note}
The algorithm GOCOM finds a CC team. If Min CC team is needed, some vertices can be removed from the output set such that $k^{*}$ is minimum. If Max CC team is needed, then some vertices could be added to the output such that condition~\ref{ecc} is satisified and $k^{*}$ is minimum.
\end{note}
\subsection{Illustration}
\label{illus}
Consider the network (graph) $G$ as in the Figure~\ref{fig3}. In this network, $diam(G) = 5$. 
\begin{figure}[h]
\centerline{\includegraphics[width=3in]{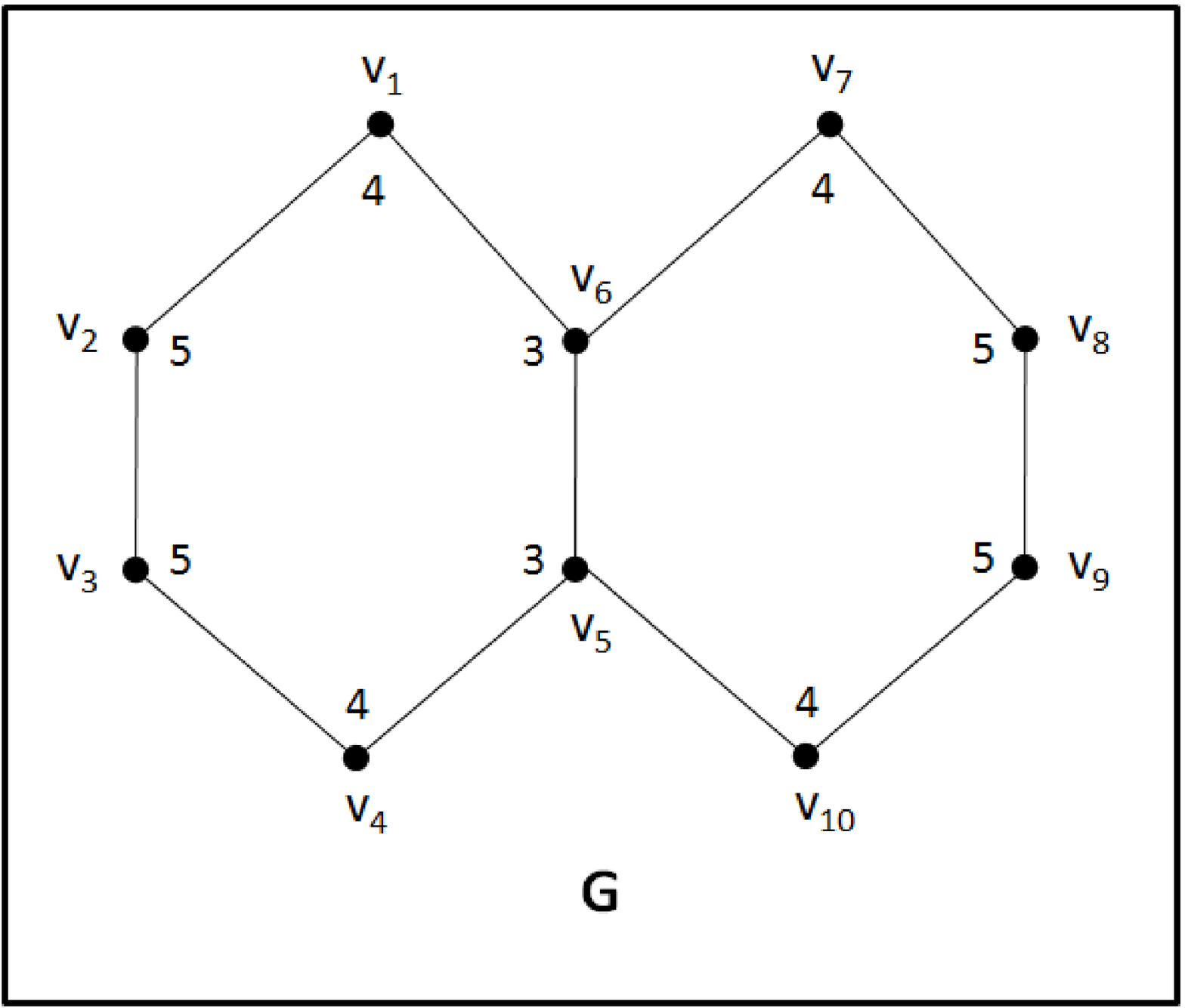}}
\caption{Illustration Figure}
\label{fig3}
\end{figure}

Inititally, let us choose the central vertex $v_6$ and add them to $D_1$. As $diam(G)$ is odd, let us choose up to $\frac{diam(G)-1}{2} = 2$ neighborhoods of $v_6$ and add them to $D_1$. Now, instant $D_1 = \{v_1, v_2, v_4, v_5, v_6, v_7, v_8, v_{10}\}$. But, we can see in Figure~\ref{fig4} that vertices $v_4, v_5$ and $v_{10}$ of $D_1$ at an intermediate interation violate the condition~\ref{ecc}. In order to make $D_1$ maintain the condition~\ref{ecc}, we remove some vertices from $D_1$.

\begin{figure}[h]
\centerline{\includegraphics[width=3in]{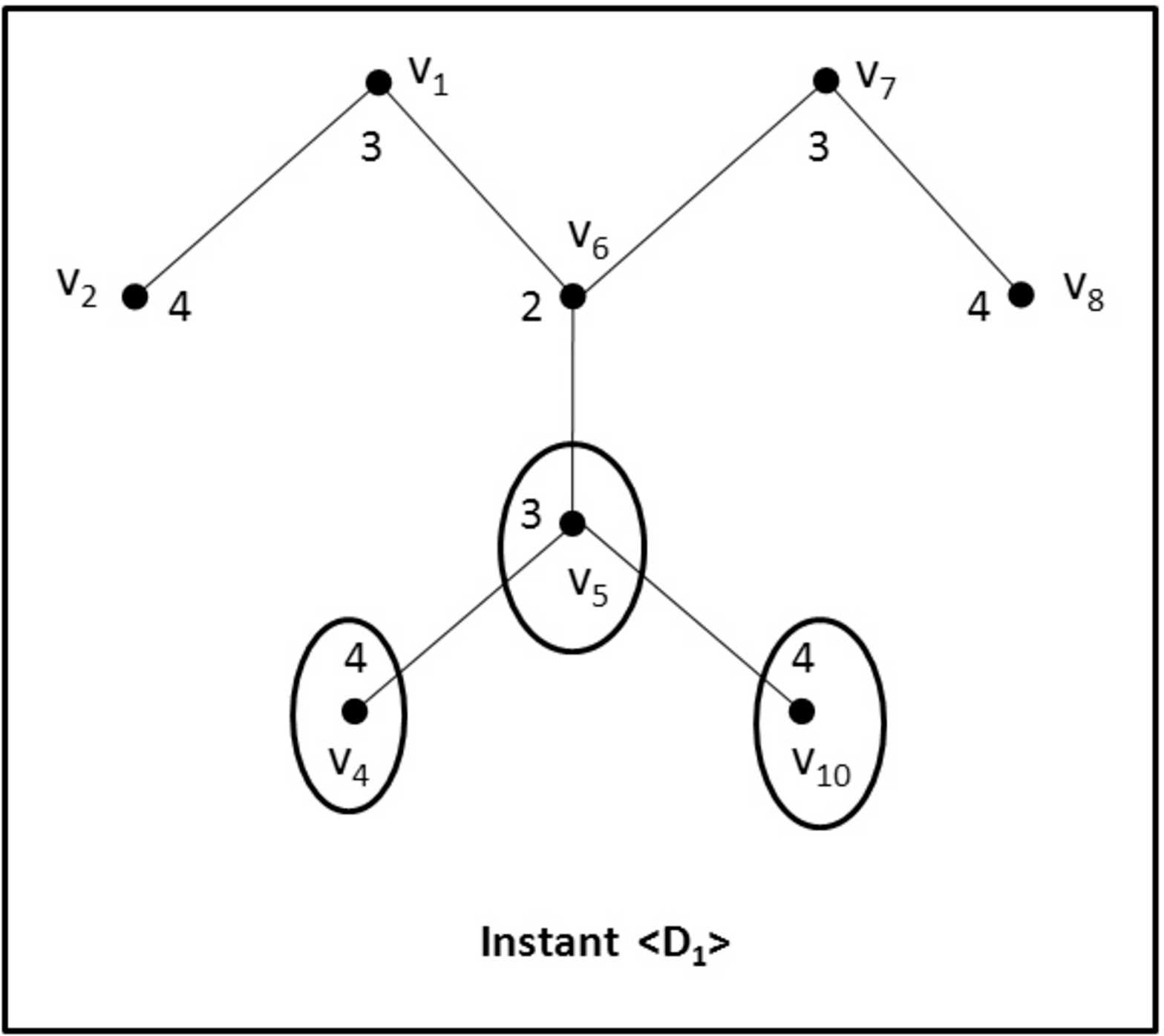}}
\caption{Intermediate Iteration showing vertices violating the condition (1)}
\label{fig4}
\end{figure}

We can either remove those vertices $v_4, v_5$ and $v_{10}$ or the vertices $v_2$ and $v_8$, so that the condition~\ref{ecc} is satisfied. So, we get two different outputs. Let us make the first one as $D_1 = \{v_1, v_2, v_6, v_7, v_8\}$ and the other one as $D_2 = \{v_1, v_4, v_5, v_6, v_7, v_{10}\}$. Refer Figure~\ref{fig5}. Both outputs satisfy the condition~\ref{ecc} and hence both $\left\langle D_1 \right\rangle$ and $\left\langle D_2 \right\rangle$ are core comfortable teams. But, for$D_1$, we get $k = 2$ and for $D_2$, we get $k=1$.

$D_1$ is a maximal CC team for $k=2$ and hence $\Gamma_{2comf}(G) = 5$. Also, for $k=2$, we see that $D_3 = \{v_5, v_6\}$ forms a minimal  CC team. The vertices $v_5$ and $v_6$ are suuficient and every vertex in $V-D_3$ is reachable from $D_3$ by a distance of at most two. Also, $D_3$ satisfies the condition~\ref{ecc}. Refer Figure~\ref{fig5}. Thus, $D_3$ is a minimal CC team  and hence $\gamma_{2comf}(G) = 2$.\\

For $k=1$, $D_2$ is a minimal as well as a maximal CC team  and hence $\gamma_{1comf}(G)= \Gamma_{1comf}(G) = 6$.  

\begin{figure}[h]
\centerline{\includegraphics[width=3in]{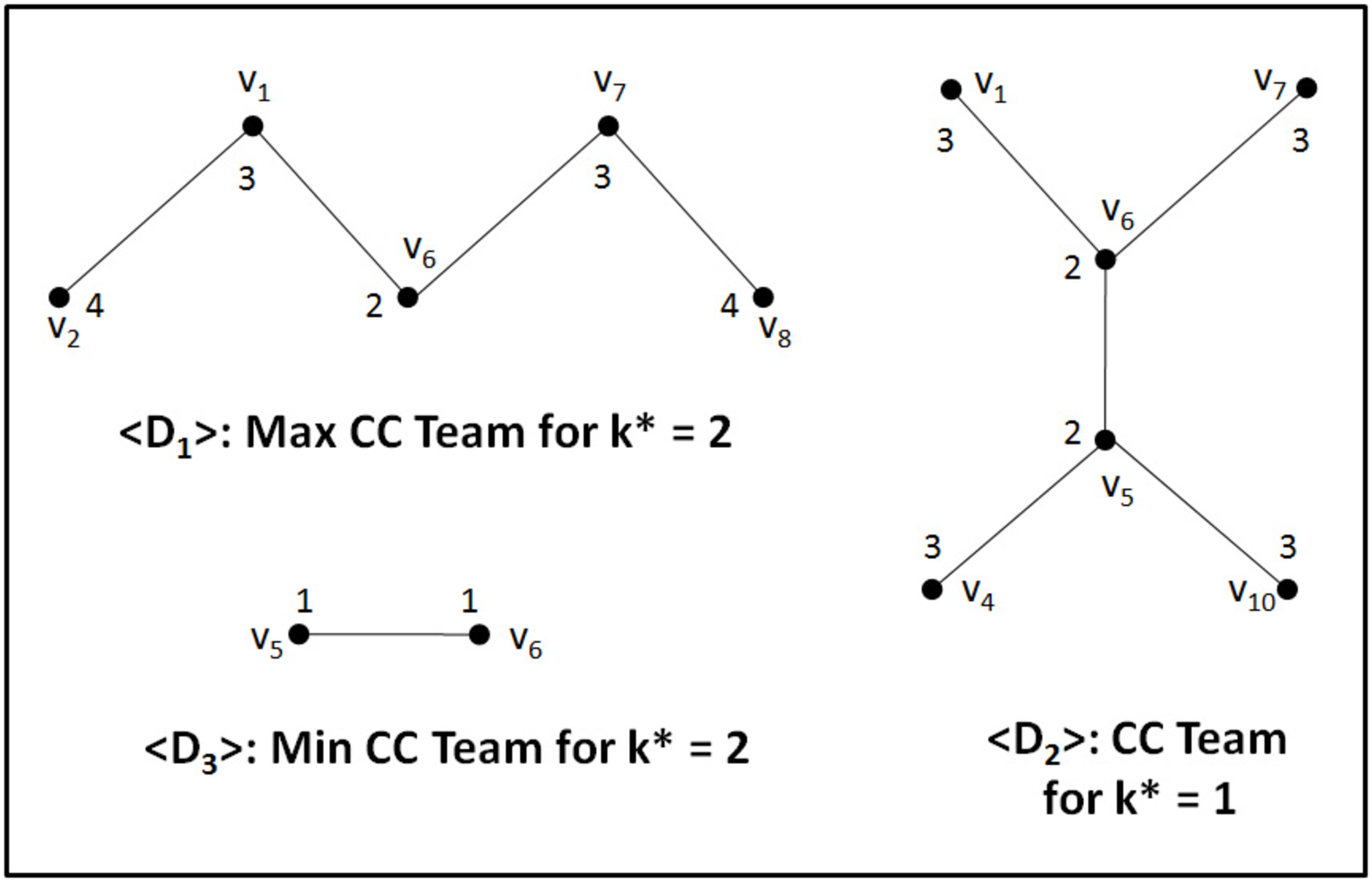}}
\caption{Three Different Outputs}
\label{fig5}
\end{figure}

\begin{note}
\label{unique}
From the Illustration~\ref{illus}, we can observe that the core comfortable team is not unique for a given network. A social network may have many  core comfortable teams. We can choose one team among all the teams whichever is suitable for a particular situation.  We can choose any CC team (minimum or maximum) according to our need for a particular situation.
\end{note}

\subsection{Time Complexity of the Algorithm GOCOM}
\label{time}
Let us discuss the time complexity of the algorithm as follows:
 
The definition of CC team and the algorithm GOCOM is dependent on eccentricity of every vertex. So, we have to find eccentricity of every vertex of $G$. By Performing Breadth-First Search (BFS) method from each vertex, one can determine the distance from each vertex to every other vertex. The worst case time complexity of BFS method for one vertex is $O(n^{2})$. As the BFS is method is done for each vertex of $G$, the resulting algorithm has worst case time complexity $O(n^{3})$. As eccentricity of a vertex $v$ is defined as $e(v) = \max\{d(u,v): u\in V(G)\}$, finding eccentricity of vertices of $G$ takes at most $O(n^{3})$.

Thus, the total worst case time complexity of the algorithm is at most $O(n^{3})$.
\subsection{Correctness of the Algorithm GOCOM}
\label{correct}
In order to prove that the  algorithm GOCOM yields a CC team, it is sufficient to prove that the condition in Step 4 of the algorithm is satisfied. As proof follows from Note~\ref{correct1}, we state the following theorem without proof.
\begin{theorem}
$e_{\left\langle D_1\right\rangle}(v) < e_G(v)$, for every vertex $v \in D_1$, where $D_1$ is the output of our Algorithm GOCOM. 
\end{theorem}
 
\subsection{Performance Ratio of the Algorithm GOCOM}
\label{ratio}
It is to be noted that the algorithm has two parameters, namely, $|D|$ and $k^{*}$. The set $D$ represents the team members of CC team,  and $k^{*}$ represents the distance between the team members and non-team members. We find performance ratio of the algorithm for finding \textbf{Min CC team}. So, it is necessary that both $D$ and $k^{*}$ should be minimized simultaneously.

The following theorem gives the performance ratio for finding the minimum less dispersive $k^{*}-$ dominating set.
Performance ratio of the algorithm for finding the minimum CC team is  equal to  $|D_1|/|D|$.
\begin{theorem}
\label{per}
The performance ratio of the algorithm for finding Min CC team  is at most $O(\ln \Delta(G))$.
\end{theorem}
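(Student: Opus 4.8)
The plan is to bound the ratio $|D_1|/|D|$ by reducing to the approximability of the minimum (connected) dominating set, reusing the correspondence already exploited in Theorem~\ref{better NPC}. First I would recall that any less dispersive $k^{*}$-dominating set of $G$ is a connected $k^{*}$-dominating set, hence a connected dominating set of the power graph $G^{k^{*}}$. Under this correspondence the minimum less dispersive $k^{*}$-dominating set $D$ is a minimum connected dominating set of $G^{k^{*}}$, while the output $D_1$ of GOCOM is a feasible connected dominating set of the appropriate power graph $G^{k}$. Thus the quantity to be controlled, $|D_1|/|D|$, is precisely the approximation ratio of a (connected) dominating-set heuristic on a power graph, and the problem is reduced to a well-studied one.

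Next I would invoke the classical greedy bound for set cover and domination. Selecting at each step a vertex that newly $k^{*}$-dominates the largest number of still-undominated vertices produces a dominating set whose size exceeds the optimum by a factor of at most the harmonic number $H(s)$, where $s$ is the largest number of vertices a single vertex can dominate, and $H(s)\le 1+\ln s=O(\ln s)$. For $k^{*}$-domination of $G$, one vertex $v$ dominates the whole closed ball of radius $k^{*}$ about $v$, so $s=deg_{G^{k^{*}}}(v)+1\le \Delta(G^{k^{*}})+1$. The connectivity requirement only adds a constant multiplicative/additive overhead (as in the greedy connected-domination analysis of Guha and Khuller), which does not change the order of the estimate. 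Hence $|D_1|/|D|=O(\ln \Delta(G^{k^{*}}))$.

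The last and most delicate step is to replace $\Delta(G^{k^{*}})$ by $\Delta(G)$. Counting the vertices reachable within distance $k^{*}$ gives
\[
\Delta(G^{k^{*}})\le \Delta(G)\sum_{j=0}^{k^{*}-1}\bigl(\Delta(G)-1\bigr)^{j}\le \Delta(G)^{k^{*}},
\]
so $\ln\Delta(G^{k^{*}})\le k^{*}\ln\Delta(G)$ and the ratio becomes $O(k^{*}\ln\Delta(G))$. Since the definition forces $k^{*}\le diam(G)$, and since the target networks are scale-free / small-world graphs whose diameter stays small (typically logarithmic, or essentially constant), the factor $k^{*}$ is absorbed and the ratio is $O(\ln\Delta(G))$, matching the claim and also explaining the abstract's assertion that the algorithm behaves well on scale-free networks. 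I expect this degree inflation of the power graph to be the principal obstacle: the dimension-free statement is really $O(k^{*}\ln\Delta(G))$, and reaching the advertised $O(\ln\Delta(G))$ requires treating $k^{*}$ (equivalently the diameter) as bounded. A secondary technical point is to verify that GOCOM's ball-and-trim construction returns a set obeying the same greedy-type counting bound, so that the inequality $|D_1|\le H(\Delta(G^{k^{*}}))\,|D|$ genuinely applies to the set the algorithm outputs rather than merely to a greedy surrogate.
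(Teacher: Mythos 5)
Your proposal follows essentially the same route as the paper's proof: reduce a less dispersive $k^{*}$-dominating set to a connected dominating set of the power graph $G^{k^{*}}$, invoke the known $O(\ln \Delta)$ approximation guarantee for connected domination, and use $\Delta(G^{k^{*}}) \leq (\Delta(G))^{k^{*}}$ to pass from $O(\ln \Delta(G^{k^{*}}))$ to $O(k^{*}\ln \Delta(G))$. The two caveats you flag --- that the final absorption $O(k^{*}\ln \Delta(G)) = O(\ln \Delta(G))$ silently treats $k^{*}$ as a constant, and that the greedy-type counting bound is never actually verified for GOCOM's ball-and-trim output --- are both present, unacknowledged, in the paper's own proof, so your version is if anything more explicit about the argument's limits.
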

\begin{proof}
Let $D_1$ be a minimal less dispersive $k-$ dominating set of $G$ and \\
let $D$ be a minimum less dispersive $k^{*}-$ dominating set of $G$.\\
$\Rightarrow D$  is a connected $k^{*}-$ dominating set of $G$ (since any less dispersive set is a connected set).\\
$\Rightarrow D$  is a connected dominating set of $G^{k^{*}}$ (by definition of the graph $G^{k^{*}}$). \\
Performance ratio for finding $\gamma_c(G)$ is at most $O(\ln \Delta(G))$.\\
$\Rightarrow$ Performance ratio for finding $\gamma_c(G^{k^{*}})$ is at most $O(\ln \Delta(G^{k^{*}}))$.\\
But, $\Delta(G^{k^{*}}) \leq (\Delta(G)) ^ {k^{*}}$.\\
$\Rightarrow$ Performance ratio for finding $\gamma_c(G^{k^{*}})$ is at most $O(\ln (\Delta(G)) ^ {k^{*}}) = O(k^{*} \ln \Delta(G)) = O(\ln \Delta(G))$.\\
Thus, the performance ratio of the algorithm for finding Min CC team  is at most $O(\ln \Delta(G))$.
\end{proof}

Next, let us give the performance ratio for finding $k^{*}$.
Performance ratio of the algorithm for finding $k^{*}$ is equal to $k / k^{*}$.
\begin{theorem}
\label{PRgo}
The performance ratio of the Algorithm GOCOM for finding $k^{*}$ is at most $r(G)$.
\end{theorem}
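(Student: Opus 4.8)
The plan is to establish the bound $k/k^{*} \le r(G)$ by controlling the numerator and the denominator separately: I would show that the output parameter $k$ can never exceed $r(G)$, while the optimal parameter $k^{*}$ is at least $1$, so that their ratio is at most $r(G)$.

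First I would pin down the lower bound $k^{*} \ge 1$. Since a core comfortable team $\left\langle D \right\rangle$ is a proper team (so $V - D \ne \emptyset$) and $D$ is a $k^{*}$-dominating set, at least one vertex of $V-D$ must be covered within distance $k^{*}$ by some member of $D$; as distances between distinct vertices are positive integers, this forces $k^{*} \ge 1$.

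The key step is the upper bound $k \le r(G)$, and here I would exploit the seed of the algorithm. By Step 1 of GOCOM, the central vertex $v$, which by definition satisfies $e_G(v) = r(G)$, is placed in $D_1$; moreover the only deletions (Step 7) are drawn from the outer neighborhoods $N_{i-1}(v), N_{i-2}(v), \ldots$, so $v$ survives in the final output $D_1$. Because $e_G(v) = r(G)$, every vertex $u \in V$ satisfies $d_G(u,v) \le r(G)$; in particular every $u \in V - D_1$ lies within distance $r(G)$ of the member $v \in D_1$. Hence every non-team vertex is reachable from $D_1$ within distance $r(G)$, which is precisely the statement that $k \le r(G)$. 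Combining this with $k^{*} \ge 1$ yields $k/k^{*} \le r(G)$, as required.

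The main obstacle is justifying rigorously that the central vertex $v$ genuinely persists to the end of the algorithm, since the whole bound collapses if $v$ could be removed in Step 7. I would argue this from the structure of the removal rule: deletions proceed inward from the peripheral neighborhoods and halt as soon as the less-dispersive condition~\ref{ecc} is restored, and since $\left\langle D_1 \right\rangle$ is required to be connected (every less dispersive set being connected), the seed $v$ around which the neighborhoods were grown is never discarded. Once this retention is secured, the remaining inequalities are immediate and the performance ratio $k/k^{*}$ is bounded above by $r(G)$.
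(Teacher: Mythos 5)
Your proposal is correct and follows essentially the same route as the paper's own proof: bound $k \le r(G)$ via a central vertex retained in $D_1$, note $k^{*} \ge 1$, and take the ratio. Your only addition is an explicit justification that the central seed survives the Step 7 deletions, a point the paper simply asserts ("the output set $D_1$ always contains at least one central vertex"), so your argument is a slightly more careful rendering of the same idea rather than a different approach.
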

\begin{proof}
Let us recall from Notation 2:  $dist(D,V-D) \leq k^{*}$ and  $dist(D_1,V-D_1) \leq k$, that is, $k$ and $k^{*}$ denote the minimal parameter  and the minimum parameter respectively.

In the Algorithm GOCOM, we start from a central vertex. So, the output set $D_1$ always contains at least one central vertex. This implies that any vertex in $V-D_1$ is reachable from  $D_1$ by distance at most $r(G)$. Thus, $k \leq r(G)$.

Also, as 1-dominating set is possible in many cases, $k^{*} \geq 1$.\\
Thus, $\frac{k}{k^{*}} \leq r(G)$.
\end{proof}
\section{Approximation Algorithm 2}
\label{second}
In this section, we give another approximation algorithm for finding core comfortable team and analyze its time complexity, correctness and performance ratio.\\
Algorithm CONCOMF:\\
Input: $G$.\\
Output: $D_1$, which is a less dispersive $k-$ dominating set, so that $\left\langle D_1\right\rangle$ is a CC team.\\
CONCOMF($G$)
\begin{enumerate}
\item Find a connected $k-$ dominating set of $G$ (using an  approximation algorithm).
\item Store  all the  vertices in $D_1$. 
\item  If $e_{\left\langle D_1\right\rangle}(v) < e_G(v)$, for every vertex $v \in D_1$, then GOTO Step 5 else Goto Step 4 (next step).
 \item Remove suitably some vertices from $D_1$  such that the condition in step 3 is satisfied .
\item Print $D_1$.
\item Stop.	
\end{enumerate} 
\begin{note}
The algorithm CONCOMF finds a CC team. If Min CC team is needed, in Step 1, find a minimal $k-$ dominating set of $G$. If Max CC team is needed, in Step 1, find a maximal connected $k-$ dominating set of $G$. Executing the Algorithm CONCOMF in graph $G$ of Figure~\ref{fig3}, we get $D_1$ is maximal CC team and $D_3$ is minimal CC team for $k=2$ and $D_2$ is a minimal and maximal CC team for $k=1$. 
\end{note}

\subsection{Time Complexity of the Algorithm CONCOMF}
The worst case time complexity of the approximation algorithm for finding connected dominating set is at most $O(n^{3})$. Also, as discussed in the Section~\ref{time}, the definition of  CC team is dependent on eccentricity of every vertex and  finding eccentricity of vertices of $G$ takes at most $O(n^{3})$ in worst case. 
 
Thus, the total worst case time complexity of the algorithm is at most $O(n^{3})$.
\subsection{Correctness of the Algorithm CONCOMF}
At the end of Step 4 in Algorithm CONCOMF, the output $D_1$ satisfies the condition~\ref{ecc}. So, we state the following theorem without proof.
\begin{theorem}
$e_{\left\langle D_1\right\rangle}(v) < e_G(v)$, for every vertex $v \in D_1$, where $D_1$ is the output of our Algorithm GOCOM. 
\end{theorem}

\subsection{Performance Ratio Of the Algorithm CONCOMF}
\label{ratio1}
As discussed in the Section~\ref{ratio}, for finding a Min CC team, it is necessary that the two parameters $|D|$ and $k^{*}$ should be minimized simultaneously.

The following theorem gives the performance ratio for finding the minimum less dispersive $k^{*}-$ dominating set. As proof follows form Theorem~\ref{per}, we state the following theorem without proof.

\begin{theorem}
\label{percon}
The performance ratio of the Algorithm CONCOMF for finding  Min CC team  is at most $O(\ln \Delta(G))$.
\end{theorem}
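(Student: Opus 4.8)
The plan is to mirror the argument of Theorem~\ref{per}, since Algorithms GOCOM and CONCOMF differ only in how the initial vertex set is produced: GOCOM grows neighborhoods outward from a central vertex, whereas Step~1 of CONCOMF invokes a black-box approximation algorithm for the connected $k$-dominating set directly. First I would fix notation as in Notation~2, letting $D_1$ denote the minimal less dispersive $k$-dominating set returned by CONCOMF and $D$ the minimum less dispersive $k^{*}$-dominating set, so that the quantity to bound for the Min CC team is the ratio $|D_1|/|D|$.

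The core of the argument is the reduction to connected domination in a graph power, exactly as in Theorem~\ref{better NPC} and Theorem~\ref{per}. Since every less dispersive set is connected, $D$ is a connected $k^{*}$-dominating set of $G$, hence a connected dominating set of the power graph $G^{k^{*}}$; likewise the set produced in Step~1 is a connected dominating set of $G^{k}$. I would then invoke the known $O(\ln \Delta)$ performance guarantee for the connected dominating set problem, applied to the power graph, which yields a bound of the form $O(\ln \Delta(G^{k^{*}}))$. Using the inequality $\Delta(G^{k^{*}}) \leq (\Delta(G))^{k^{*}}$ together with $\ln (\Delta(G))^{k^{*}} = k^{*}\ln \Delta(G)$, and the fact that $k^{*} \le diam(G)$ is a bounded multiplicative constant for the purposes of the asymptotics, the factor $k^{*}$ is absorbed and the ratio collapses to $O(\ln \Delta(G))$.

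The point that needs separate justification, and the main obstacle, is that CONCOMF does not halt at the set returned by the connected-domination approximation: Step~4 deletes further vertices to enforce the less dispersive condition~\ref{ecc}. I would argue that this pruning can only shrink the set, so if $S$ is the approximate connected dominating set with $|S| \le O(\ln \Delta(G^{k}))\,\gamma_c(G^{k})$ and $D_1 \subseteq S$ is the pruned output, then $|D_1| \le |S|$ and the upper bound survives deletion. The genuinely delicate step is matching the two radii: the approximation is computed for a $k$-dominating set while the optimum $D$ is a $k^{*}$-dominating set, so one must either arrange $k = k^{*}$ in Step~1 or argue that both radii are controlled by $diam(G)$ and therefore contribute only a constant to the asymptotic ratio. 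Combining the surviving upper bound with the lower bound $|D| \ge \gamma_c(G^{k^{*}})$ from the reduction then gives $|D_1|/|D| = O(\ln \Delta(G))$, which is the assertion of the theorem.
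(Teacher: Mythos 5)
Your proposal takes essentially the same route as the paper: the paper states this theorem without proof, remarking that it follows from Theorem~\ref{per}, whose argument is exactly your reduction (less dispersive implies connected, so the optimum is a connected dominating set of the power graph $G^{k^{*}}$, then apply the $O(\ln \Delta)$ guarantee for connected domination and absorb $k^{*}$ via $\Delta(G^{k^{*}}) \leq (\Delta(G))^{k^{*}}$). Your added observations --- that the Step~4 pruning only shrinks the set and so preserves the size bound, and that the radii $k$ and $k^{*}$ must be reconciled --- are details the paper leaves implicit, but they do not change the argument.
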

Next, we give performance ratio of the algorithm CONCOMF for finding $k^{*}$.
\begin{theorem}
\label{PR}
The performance ratio of the Algorithm CONCOMF for finding $k^{*}$ is at most $diam(G)$.
\end{theorem}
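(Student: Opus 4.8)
The plan is to bound the ratio $k/k^{*}$ for the Algorithm CONCOMF, exactly as was done for GOCOM in Theorem~\ref{PRgo}, but noting that CONCOMF does not insist on starting from a central vertex. Recall from Notation 2 that $k^{*}$ is the minimum parameter (satisfying $dist(D,V-D)\leq k^{*}$) and $k$ is the parameter of the minimal set $D_1$ output by the algorithm (satisfying $dist(D_1,V-D_1)\leq k$). The performance ratio for finding $k^{*}$ is by definition $k/k^{*}$, so the entire task reduces to producing an upper bound on $k$ and a lower bound on $k^{*}$.

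First I would establish the lower bound on $k^{*}$. Since a $1$-dominating set is achievable in many networks and $k^{*}$ is a positive integer representing the dominating distance, the best possible (minimum) value is $k^{*}\geq 1$, precisely the argument used at the end of the proof of Theorem~\ref{PRgo}. Next I would establish the upper bound on $k$. The set $D_1$ is a $k$-dominating set obtained from a connected dominating set of $G$; every vertex of $V-D_1$ must be reachable from $D_1$. Unlike GOCOM, CONCOMF builds $D_1$ from a connected dominating set without guaranteeing the inclusion of a central vertex, so I cannot tighten the reach to $r(G)$. The crudest valid bound on the distance from any vertex in $V-D_1$ to the set $D_1$ is the largest shortest-path distance in the whole graph, namely $diam(G)$; hence $k\leq diam(G)$. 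Combining the two bounds gives
\begin{equation}
\frac{k}{k^{*}}\leq\frac{diam(G)}{1}=diam(G),
\end{equation}
which is the claimed ratio.

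The main obstacle is justifying why the bound is $diam(G)$ here rather than the sharper $r(G)$ of Theorem~\ref{PRgo}. The essential point I would emphasize is that the starting seed in CONCOMF is an arbitrary connected dominating set, so $D_1$ need not contain a central vertex; without that guarantee the worst-case reach from an undominated vertex back to $D_1$ is controlled only by the diameter, not the radius. One should also confirm that after the removal of vertices in Step~4 (performed to restore the less-dispersive condition~\ref{ecc}), the resulting $D_1$ remains $k$-dominating for some $k\leq diam(G)$; this holds because removing vertices can only increase distances from $V-D_1$ to $D_1$, and any such distance is still capped by $diam(G)$ in a connected graph. I regard this monotonicity check as the only genuinely delicate step, since the rest is a direct transcription of the argument already given for GOCOM.
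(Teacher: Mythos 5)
Your proposal is correct and follows essentially the same argument as the paper: the lower bound $k^{*}\geq 1$ from the possibility of $1$-domination, and the upper bound $k\leq diam(G)$ from the fact that any vertex-to-set distance in a connected graph is capped by the diameter. Your additional remarks (why $diam(G)$ rather than $r(G)$, and the monotonicity of distances under vertex removal in Step~4) are sound elaborations of points the paper leaves implicit, but they do not change the route.
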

\begin{proof}
As 1-dominating set is possible in many cases, $k^{*} \geq 1$.

Also, any vertex in $D_1$ is reachable form $V-D_1$ by a distance of at most $diam(G)$. This implies that $k \leq diam(G)$.

Thus, $\frac{k}{k^{*}} \leq diam(G)$. 
\end{proof}
\section{Advantages}
\label{adv}
\textbf{Advantage 1:} The algorithms give \textbf{good results in scale free networks} for finding core comfortable team.\\
\textbf{Explanation:} The performance ratio of the Algorithms GOCOM and CONCOMF for finding $k^{*}$, are dependent on $r(G)$ and $diam(G)$ respectively (by Theorems~\ref{PR} and ~\ref{PRgo}). It is known that  the growing scale-free networks have almost constant diameter in practice. So, the algorithms give constant performance ratio in scale free networks.\\  
\textbf{Advantage 2:} The algorithms can be applied in any \textbf{random networks} for finding CC team.\\
\textbf{Explanation:}  From the theorems~\ref{per}, \ref{PRgo}, \ref{percon} and~\ref{PR}, it is clear that the performance ratio of the algorithm for finding $|D|$ and $k^{*}$ is dependent on  $diam(G)$ and $\Delta(G)$. As both these terms can be expressed in terms of the probability $p$, the performance ratio of the algorithm can be easily obtained for random networks in terms of $p$.\\
\textbf{Advantage 3:} CC team can be obtained in \textbf{disconnected networks} also using the algorithms. \\
\textbf{Explanation:} If the network (graph) is disconnected, then as mentioned in the Section~\ref{intro}, algorithm can can be applied to each connected component of the network.  Thus, algorithm can be applied to find CC team in any given network. 

\section{Conclusion}
\label{conc}
In this paper,  core comfortable team of a social network is defined. It is proved that forming core comfortable  team in any given network is NP-complete. Two polynomial time approximation algorithms are given for finding a CC team in any given network and the time complexity of those algorithms are given to be $O(n^{3})$, where $n$ is the number of vertices of $G$. The correctness of the algorithms are analyzed.  The performance ratio of the algorithms for finding Min CC team is proved to be $O(\ln \Delta)$, where $\Delta$ is the maximum degree of $G$. The performance ratio of the two algorithms for finding $k^{*}$ is proved to be at most $r(G)$ and $diam(G)$ respectively. It is also proved that the algorithms give good results in scale free networks.
\subsection{Future Work}
The algorithms can be applied in a particular social network, for example, Poisson network and can be tried to reduce the performance ratio in that network. Algorithms can be implemented to get exact values also in some particular networks, for example scale free networks and so on. Also, as discussed in Note~\ref{unique}, a social network can have many CC teams. So, we can analyse the different situations for which the  CC teams are suitable.



\begin{thebibliography}{12}


\bibitem {Chang} {G.J. Chang}, {\em k-domination and graph covering problems}, Ph.D. Thesis, School of OR and IE, Cornell University, Ithaca, NY, 1982.

\bibitem {Chang2} {G.J. Chang, G.L. Nemhauser}, {\em The k-domination and k-stability problems on sun-free chordal graphs}, {SIAM J. Algebraic Discrete Methods}, 5, (1984), 332-345. 



\bibitem{group} {Donelson R.Forsyth}, {\em Group Dynamics}, 3rd ed. Belmont, CA: Wadsworth, (1999).


\bibitem{Slater} {T. W. Haynes, S. T. Hedetniemi, P. J. Slater}, {\em Fundamentals of domination in graphs}, Marcel Dekker, New York, (1998).

\bibitem{REF6} {T.W.Haynes, Stephen T.Hedetniemi, Peter J.Slater}, {\em Domination in Graphs: Advanced Topics}, Marcel  Dekker, Inc. New York, (1998).
\bibitem{JRLP}{Lakshmi Prabha S, T.N.Janakiraman}, {\em Polynomial-time Approximation Algorithm for finding  Highly Comfortable Team in any given  Social Network}, ArXiv, 18 May 2014, http://arxiv.org/abs/1405.4534 or http://arxiv.org/pdf/1405.4534v1.pdf.

\bibitem{china}{Lakshmi Prabha S, T.N.Janakiraman}, {\em Polynomial-time Approximation Algorithm for finding  Totally Comfortable Team in any given  Social Network}, Accepted in Proceddings of the ICHSS 2014.


\bibitem{diameter} {F.Martino and A.Spoto}, {\em Social Network Analysis:
A brief theoretical review and further perspectives in the study of Information Technology}, PsychNology Journal, Volume 4, Number 1, (2006),  pp. 53-86.

\bibitem{team} {S. Mickan and S.Rodger}, {\em Characteristics of effective teams:
a literature review}, Australian Health Review, Vol 23 • No 3, (2000), 201-208.


\bibitem{Ore} {O.Ore}, {\em  Theory of Graphs}, Amer. Soc. Colloq. Publ. vol. 38. Amer. Math. Soc., Providence, RI, (1962).

\bibitem{CDS} {E. Sampathkumar, H. B. Walikar}, {\em The connected domination number of  a graph}, J.Math.Phys.Sci., 13, (1979), 607-613.






\end{thebibliography}


\end{document}